\documentclass[runningheads]{llncs}
\usepackage[T1]{fontenc}
\usepackage{graphicx}

\usepackage{amsmath}
\usepackage{amssymb}
\usepackage{mathtools}

\usepackage{algorithm} %
\usepackage{algpseudocode}
\floatname{algorithm}{Algorithm} %
\usepackage{float}
\usepackage{tikz} %
\usetikzlibrary{positioning}
\usepackage{tikz}
\usetikzlibrary{positioning,calc,arrows.meta}
\usepackage{enumitem} %
\usepackage{cite}
\setlist[itemize]{itemsep=3pt, parsep=0mm, leftmargin=1.5em}
\setlist[enumerate]{itemsep=3pt, parsep=0mm, leftmargin=1.5em}
\usepackage{amssymb}

\begin{document}
\title{Envy-Free School Redistricting\\Between Two Groups}
\author{Daisuke Shibatani \and Yutaro Yamaguchi}
\authorrunning{Daisuke Shibatani and Yutaro Yamaguchi}
\institute{Osaka University, Japan.\\\email{yutaro.yamaguchi@ist.osaka-u.ac.jp}}
\maketitle              %
\begin{abstract}

We study an application of fair division theory to school redistricting.
Procaccia, Robinson, and Tucker-Foltz (SODA 2024) recently proposed a mathematical model to generate redistricting plans that provide theoretically guaranteed fairness among demographic groups of students.
They showed that an almost proportional allocation can be found by adding $O(g \log g)$ extra seats in total, where $g$ is the number of groups.
In contrast, for three or more groups, adding $o(n)$ extra seats is not sufficient to obtain an almost envy-free allocation in general, where $n$ is the total number of students.
In this paper, we focus on the case of two groups.
We introduce a relevant relaxation of envy-freeness, termed \emph{1-relaxed envy-freeness}, which limits the capacity violation not in total but at each school to at most one.
We show that there always exists a 1-relaxed envy-free allocation, which can be found in polynomial time.
\end{abstract}

\section{Introduction}

\subsection{Background}

The public school a student attends is typically determined by their place of residence.
Even within the same municipality, significant disparities in school quality may exist, which makes the assignment of school attendance zones a crucial issue for students.
Due to several factors, municipalities must periodically restructure these zones.
However, school redistricting is often controversial and can sometimes cause severe conflicts. 

To mitigate such turmoil, Procaccia, Robinson, and Tucker-Foltz \cite{school} recently proposed a mathematical model to generate redistricting plans that provide theoretically guaranteed fairness among demographic groups based on fair division theory.
As noted in \cite{fair, survey}, problems related to fair division are ubiquitous, encompassing scenarios such as cake cutting, inheritance division, and chore allocation.
The problem of fair division has a very long history, %
and there are two primary concepts of fairness: proportionality and envy-freeness.
An allocation is considered \emph{proportional} if, given $n$ agents, every agent perceives their share to be worth at least $1/n$ of the total value.
An allocation is \emph{envy-free} if no agent strictly prefers another agent's share to their own.
While early research on fair division primarily focused on infinitely divisible goods, many recent studies have addressed fair division problems involving indivisible goods.

Procaccia et al.~\cite{school} applied fair division theory to school redistricting for the first time, defining relaxations of proportionality and envy-freeness specifically for this problem. 
They showed that an almost proportional allocation can always be found by adding $O(g \log g)$ seats in total, where $g$ is the number of groups; furthermore, when $g$ is constant, such an allocation can be found in polynomial time.
In contrast, they claimed that envy-freeness is difficult to achieve by demonstrating for $g \ge 3$ that adding $o(n)$ seats is not sufficient to obtain an almost envy-free allocation in general, where $n$ is the number of students.

\subsection{Our contribution}
In this study, we focus on the case where there exist two groups (when $g = 2$) and aim to establish a method to obtain an envy-free allocation.
Even in this case, as school seats are indivisible, we need to consider relaxation of envy-freeness. 
Specifically, we first introduce another reasonable relaxation of envy-freeness for the school redistricting problem, termed \emph{1-relaxed envy-freeness} (Definition~\ref{def:1-relax}), which only allows the capacity violation in \emph{each} school by at most one while the \emph{total} violation is bounded by a parameter $t$ in \cite{school}.
Although a 1-relaxed envy-free allocation does not necessarily exist when $g \ge 3$ as in the previous case \cite{school}, we show that when $g = 2$, it always exists and can be computed in polynomial time (Theorem~\ref{thm:a}).
The proof is based on the integrality of a network flow problem and ad-hoc transformation of solutions that are not 1-relaxed envy-free.

\subsection{Related work}
As reviewed by Suksompong \cite{Suksompong}, fair division has been extensively studied under various constraints.
However, none of these models are directly applicable to the setting of \cite{school}.
In fair division with several constraints, the allocation problem becomes significantly more complex --- or even impossible --- when the number of agents is three or more. 
For instance, in the continuous resource model (i.e., cake cutting), Stromquist \cite{Stromquist} proved that calculating a connected envy-free allocation for three or more agents using a finite number of queries is impossible.
For indivisible goods, under cardinality constraints with monotonic utility functions, Kyropoulou, Suksompong, and Voudouris \cite{Kyropoulou} proved the existence of an \emph{EF1 (envy-free up to one item)} allocation for two agents, whereas the case of three or more agents remains an open problem.
Suksompong \cite{Suksompong} outlines the differences between the cases of two agents and three or more agents under various other settings, including indivisible goods on a graph, graphical cake cutting, geometric constraints, and conflict constraints.
This illustrates that restricting the problem to two agents can sometimes enable the discovery of almost envy-free allocations that are intractable in the general case.

As mentioned above, \cite{school} is pioneering work on fair districting in the context of school redistricting.
Several other studies have addressed school redistricting, primarily focusing on optimization approaches.
For example, Allman, Ashlagi, Lo, Love, Mentzer, Ruiz-Sets, and O'Connell \cite{Allman} proposed an optimization framework aimed at preventing racial re-segregation and achieving diverse student demographics.

Closely related theoretical frameworks can also be found in \cite{Benabbou1, Benabbou2}, which consider matchings in weighted bipartite graphs with agents on one side and goods on the other.
Compared to these models, the setting in \cite{school} (as well as ours) is simpler in that all agents share identical valuations for the schools.
However, it introduces significant challenges because every student must be assigned to one of their accessible schools defined by geographical constraints. %

Subsequent to \cite{school}, Cembrano, Moraga, and Verdugo \cite{Cembrano} also investigated envy-freeness in the context of school redistricting. However, their notion of envy-freeness simply compares utilities under a specific allocation, ignoring geographical accessibility and seat-subset conditions. Therefore, their definition is fundamentally different from the concept of envy-freeness employed in \cite{school} and in our present study.

\section{Model and Result}
An instance of the \emph{school redistricting problem} is described by the following components:
\begin{itemize}
    \item A set of \emph{students} $ N = \{ 1,2,\dots,n \}$;
    \item A partition of the students into $g$ \emph{groups} $ N = N_1 \cup N_2 \cup \dots \cup N_g $;
    \item A set of \emph{schools} $M= \{ 1,2,\dots,m \}$, where each school $k$ has a \emph{capacity} $c_k$ and a (common) \emph{value} $v_k$;
    \item A bipartite graph $\mathcal{G} = (N \cup M, \mathcal{E})$ between students and schools, where the edge set $\mathcal{E}$ represents which students can be assigned to (i.e., can commute to) which schools.
\end{itemize}

An \emph{allocation} is a function $A\colon N \to M$ such that $\{ j, A(j) \} \in \mathcal{E}$ for every student $j \in N$.
Let $A^{-1}(k)=\{ j \in N \mid A(j) = k \}$ and $A^{-1}_i(k)= A^{-1}(k) \cap N_i$.
The \emph{utility} of group $i$ under an allocation $A$ is defined as $u(i,A) \coloneqq \sum_{j \in N_i} v_{A(j)}$.

The formulation up to this point is identical to that in \cite{school}.
In our model, we assume an \emph{initial} allocation $B$ is given, and the capacity of each school is set to $c_k = |B^{-1}(k)|$.
This $B$ can be interpreted as the current allocation or a tentative allocation only caring the number of students assigned to each school.
In either case, our goal is to correct unfairness of $B$, where it is desirable to assign almost the same number of students to each school as $B$.
In particular, we call an allocation $A$ \emph{amount-preserving} if $|A^{-1}(k)| = c_k$ for all $k \in M$.

Even when there are two groups, the indivisibility of school seats requires a relaxation of envy-freeness.
We define 1-relaxed envy-freeness as follows.

\begin{definition}\label{def:1-relax}
An allocation $A$ is \emph{$1$-relaxed envy-free} if it satisfies
\begin{align}\label{a}
\max_{k\in M} \left||A^{-1}(k)|-c_k\right| \le 1
\end{align}
and, for any pair of groups $(i_1,i_2)$, no other allocation $A'$ simultaneously satisfies the following three inequalities:
\begin{align}\label{b}
\max_{k\in M} \left||{A'}^{-1}(k)|-c_k\right|  &\le n\cdot \max_{k\in M} \left||A^{-1}(k)|-c_k\right|, \\ \label{c}
u(i_1,A') &> u(i_1,A), \\ \label{d}
|{A'}^{-1}_{i_1}(k)| &\le |A^{-1}_{i_2}(k)|  \quad (\forall k \in M). 
\end{align}
\end{definition}

First, \eqref{a} states that a capacity deviation of up to $1$ is allowed in each school under allocation $A$.
An allocation $A'$ satisfying \eqref{b}--\eqref{d} is considered to cause envy of group $i_1$ to $i_2$ in the allocation $A$, and we say $i_1$ \emph{envies} $i_2$ if there exists such an $A'$:
\begin{itemize}
    \item \eqref{b} ensures that a capacity deviation in $A'$ is permitted only to the extent that a deviation already exists in $A$.
    This means that, if $A$ is amount-preserving, then envy can be justified only by an amount-preserving allocation $A'$, but otherwise (if $A$ has a capacity deviation of at most one per school), it can be justified by any allocation $A'$ (which can have an arbitrarily large deviation).
    \item \eqref{c} indicates that the utility of group $i_1$ strictly increases.
    \item \eqref{d} means that the seats assigned to $i_1$ under $A'$ can be regarded as a subset of the seats assigned to $i_2$ under $A$.
\end{itemize}

The notion of 1-relaxed envy-freeness differs from $t$-envy-freeness in \cite{school} in two senses.
One is how capacity violations are measured: our definition imposes per-school bounds (evaluated both above and below), whereas only the total violation is bounded (only from above) by a parameter $t$ in \cite{school}.
The other is how the utilities of $i_1$ under $A$ and $A'$ are compared: our definition simply compares them as \eqref{c}, whereas the inequality is also relaxed in \cite{school} by adding a $\Theta(t)$ term in the right-hand side (which requires a large gap of utilities to justify envy of $i_1$).
Thus, if the capacity of each school is defined in the same way, a 1-relaxed envy-free allocation in our definition can be translated into an $(m/2)$-envy-free allocation in the sense of \cite{school}.
Also, using essentially the same argument as in the proof of \cite[Theorem 3.1]{school}, we can observe that a 1-relaxed envy-free allocation does not necessarily exist when $g \ge 3$.

The main result of this paper is stated as follows:

\begin{theorem}\label{thm:a}
For any instance of the school redistricting problem with two groups, a 1-relaxed envy-free allocation always exists.
Furthermore, such an allocation can be constructed in polynomial time.
\end{theorem}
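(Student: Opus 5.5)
We would reduce envy to a max-cost flow quantity. For a seat vector $s\in\mathbb{Z}_{\ge 0}^M$ and a group $i$, let $V_i(s)$ be the maximum of $\sum_{j\in N_i} v_{A(j)}$ over assignments of $N_i$ to accessible schools using at most $s_k$ seats at each $k$, with $V_i(s)=-\infty$ if no such assignment exists. This is a max-cost flow on the subgraph of $\mathcal{G}$ induced by $N_i\cup M$, so it is computable in polynomial time and has an integral optimum. Write $a_i(k)=|A_i^{-1}(k)|$. If $A$ is not amount-preserving, \eqref{b} places no restriction on $A'$, and the group-$i_2$ part of $A'$ is unconstrained. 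Hence $i_1$ envies $i_2$ iff $V_{i_1}(a_{i_2})>u(i_1,A)$. If $A$ is amount-preserving, the same holds with the extra requirement that the group-$i_1$ assignment extends to an amount-preserving $A'$, which is a flow feasibility question. Two easy observations follow: $i_1$ can envy $i_2$ only if $|N_{i_1}|\le |N_{i_2}|$, and checking 1-relaxed envy-freeness is polynomial.

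\textbf{Unbalanced case $|N_1|<|N_2|$.} Only group $1$ can envy. We would take an amount-preserving allocation maximizing $u(1,\cdot)$, obtained by a max-cost flow with demands $c_k$ and costs $v_k$ only on group-$1$ edges. Suppose group $1$ envies group $2$ via some $A'$ (amount-preserving when required). Consider the union of the group-$1$ parts of $A$ and $A'$ as a flow difference in the residual network of the school-capacity flow. Standard path/cycle decomposition should give an augmenting structure for $A$ that strictly increases $u(1,\cdot)$. The possible obstruction is that group-$2$ students must be re-routed along paths in $\mathcal{G}$ that may not exist. Where re-routing fails, a school ends up with one seat too many or too few. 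That is exactly what \eqref{a} tolerates, and once $A$ deviates, envy must be re-checked in the unrestricted form $V_1(a_2)\le u(1,A)$.

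\textbf{Balanced case $|N_1|=|N_2|$.} Both directions matter. We would first solve the fractional relaxation: give each school a fractional split $c_k=x_{1k}+x_{2k}$ and fractional assignments of each group. Take the symmetric solution in which each group receives seat vector $c/2$ at each school; this is possible by halving $B$ when $B$ treats both groups alike. In general, we would use an LP asking that each group's fractional assignment be optimal for $V_i$ with respect to its own seats and that the two values $u(1),u(2)$ be at least the cross values $V_1(x_2),V_2(x_1)$. Then we round with the integrality of the bipartite flow polytope, so each $x_{ik}$ is rounded to $\lfloor\cdot\rfloor$ or $\lceil\cdot\rceil$ and each school's total moves by at most one. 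This gives \eqref{a} directly.

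\textbf{Main obstacle.} The hard step is that rounding can create envy. Rounding may give group $i_2$ an extra seat at a valuable school, so that $V_{i_1}(a_{i_2})$ exceeds $u(i_1,A)$. Our first attempt would be a potential argument. Among all allocations satisfying \eqref{a}, choose one lexicographically maximizing $(\min_i u(i,A),\ \max_i u(i,A))$. If group $1$ envies group $2$ via $A'$, we would swap along the alternating structure between the group-$1$ part of $A'$ and the group-$2$ part of $A$: move group-$1$ students into the seats of $A'$, and move displaced group-$2$ students back along their original edges. The aim is a new allocation still satisfying \eqref{a} with larger potential. We would try to prove this by case analysis on whether each augmenting path ends at a school with deviation $+1$, $0$ or $-1$. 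The delicate point is ensuring that a swap does not push some school's deviation to $2$, and this is where we expect the ad-hoc transformations to be needed. Each improvement strictly increases an integer potential bounded by $n\max_k v_k$, or by a polynomial number of distinct flow values. Together with polynomial-time envy checking, this yields the algorithmic claim.
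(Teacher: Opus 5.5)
\textbf{Gap: the balanced case is not proved.} When $|N_1|=|N_2|$, your argument does not go through. The proposed ``LP'' is not linear: a constraint such as $u(1)\ge V_1(x_2)$, with $V_1$ concave in $x_2$, is reverse-convex, and you never show it is feasible. You also concede that rounding can create envy. The lexicographic-potential repair then stops exactly at the critical step. The natural improving move is to replace group~1's assignment by that of the witness, whose seat vector is (by counting) exactly $a_2$. This changes the total at school $k$ by $a_2(k)-a_1(k)$, which can be arbitrarily large, so \eqref{a} is lost. The bound $n\max_k v_k$ would in any case give at best pseudo-polynomial time, and only for integral values.

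\textbf{Missing idea.} In the balanced case no optimality is needed. Summing \eqref{d} over $k$ forces $|{X'}^{-1}_1(k)|=|X^{-1}_2(k)|$ for every $k$, so any witness satisfies $u(1,X')=u(2,X)$. Hence an allocation satisfying \eqref{a} with $|X^{-1}_1(k)|=|X^{-1}_2(k)|$ at every school has $u(1,X)=u(2,X)$ and cannot be envied, whether or not it is amount-preserving. The right target is therefore exact per-school balance of the two groups. This is also where the one-seat slack is actually used, namely at schools with odd $c_k$.

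\textbf{How the paper reaches that target.} For an amount-preserving $A$, \eqref{b} forces the witness to be amount-preserving, and the same counting makes it a perfectly-swapped allocation. So either $B$ has no perfectly-swapped allocation, in which case $B$ itself is the answer, or some perfectly-swapped $B'$ exists.
\begin{enumerate}
\item Averaging $B$ and $B'$ gives precisely your symmetric fractional point $c/2$.
\item Flow integrality on a network with bounds $[\lfloor c_k/2\rfloor,\lceil c_k/2\rceil]$ on each (school, group) arc and exactly $c_k$ per school yields an amount-preserving $A$. In $A$ the two group counts differ by at most one at each school.
\item Apply the same dichotomy to $A$. If $A$ has a perfectly-swapped $A'$, the paper moves group-1 students along paths in the graph with arcs $X(j)\to A'(j)$, from excess to deficient schools. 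Each imbalanced school changes by exactly one student.
\end{enumerate}
Keeping group~2 from $A$ and taking group~1 from $A'$ would also work for the last step.

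\textbf{Unbalanced case.} You reach the right allocation but argue it poorly. Since $A$ is amount-preserving, the right-hand side of \eqref{b} is $0$. So the witness $A'$ is itself amount-preserving, and $u(1,A')>u(1,A)$ contradicts the maximality of $A$ directly. No re-routing is needed, and no capacity deviation should be introduced.
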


\section{Proofs}
In this section, we prove Theorem~\ref{thm:a}.
We prove it separately for the cases where the sizes of two groups are different (in Section~\ref{sec:easy}) and the same (in Section~\ref{sec:main}).

\subsection{Easy case: When two groups have different sizes}\label{sec:easy}
When $|N_1| \neq |N_2|$, the following lemma completes the proof.

\begin{lemma}\label{lem:a}
If the two groups have different sizes, an amount-preserving allocation maximizing the utility of the smaller group is always 1-relaxed envy-free.
Furthermore, such an allocation can be computed in polynomial time.
\end{lemma}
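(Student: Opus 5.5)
Without loss of generality let $|N_1| < |N_2|$, and let $A$ be an amount-preserving allocation maximizing $u(1,A)$ among all amount-preserving allocations. Such an allocation exists because $B$ itself is amount-preserving. Since $A$ is amount-preserving, \eqref{a} holds with left-hand side $0$, and \eqref{b} forces any witness $A'$ to be amount-preserving as well. We must rule out two kinds of envy: group $1$ envying group $2$, and group $2$ envying group $1$.

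\emph{Group $1$ envying group $2$.} Suppose an amount-preserving $A'$ satisfies $u(1,A') > u(1,A)$. This directly contradicts the maximality of $u(1,A)$, so no witness exists and condition \eqref{d} is not even needed.

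\emph{Group $2$ envying group $1$.} Here \eqref{d} would require $|{A'}^{-1}_{2}(k)| \le |A^{-1}_{1}(k)|$ for every school $k$. Summing over $k$ gives $|N_2| = \sum_k |{A'}^{-1}_2(k)| \le \sum_k |A^{-1}_1(k)| = |N_1|$, which contradicts $|N_1| < |N_2|$. So no witness exists. This size-counting step is the only place the hypothesis $|N_1|\neq|N_2|$ is used, and it is why we maximize the utility of the \emph{smaller} group.

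\emph{Polynomial-time computation.} We model the problem as a min-cost (equivalently max-weight) flow.
\begin{itemize}
\item The network has a source $s$, a node for each student, a node for each school, and a sink $t$.
\item Each arc $s\to j$ has capacity $1$, and each arc $k\to t$ has capacity $c_k$.
\item Each edge $\{j,k\}\in\mathcal{E}$ gives an arc $j\to k$ with weight $v_k$ if $j\in N_1$ and weight $0$ otherwise.
\item We ask for a maximum-weight flow of value exactly $n=\sum_k c_k$.
\end{itemize}
Such a flow saturates every arc, so it corresponds exactly to an amount-preserving allocation, and $B$ shows one is feasible. By integrality of min-cost flow, an optimal integral flow exists and can be found in polynomial time; it yields the desired $A$.

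The only mildly delicate point is checking that the flow formulation captures exactly the amount-preserving allocations. Feasibility is guaranteed by $B$, and saturation follows from the total capacity equalling $n$. The rest is routine.
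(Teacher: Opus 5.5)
Your proof is correct and follows the paper's argument: \eqref{b} forces any witness to be amount-preserving, so the smaller group's envy contradicts maximality, and summing \eqref{d} over schools gives $|N_2| \le |N_1|$, ruling out the larger group's envy. Your min-cost flow formulation is an equivalent substitute for the paper's reduction to weighted $b$-matching solved by the Hungarian method.
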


\begin{proof}
In this setting, since an initial allocation is given and gives the capacity of each school, every instance is feasible.
Let group 1 be the smaller group and group 2 be the larger group among the two groups.
Let $A^*$ be a utility-maximizing amount-preserving allocation for group 1; that is, $A^*$ is an allocation that maximizes $u(1, A^*)$ subject to $|{A^*}^{-1}(k)| = c_k$ for all $k \in M$.

First, we show that group 2 (larger) does not envy group 1 (smaller).
From \eqref{d} in Definition~\ref{def:1-relax}, if group 2 envies group 1, there exists an allocation $A'$ satisfying
\begin{align*}
|{{A'}^{-1}_{2}}(k)| \le |{{A^*}^{-1}_{1}}(k)|\quad (\forall k \in M).
\end{align*}
By summing this up, we obtain
\begin{align*}
|N_2| = \sum_{k\in M}|{{A'}^{-1}_{2}}(k)| \le \sum_{k\in M}|{A^*}^{-1}_{1}(k)|=|N_1|,
\end{align*}
but this contradicts the fact that the size of group 2 is strictly greater than the size of group 1. %
Thus, group 2 cannot envy group 1.

Next, we show that group 1 does not envy group 2.
From \eqref{b} and \eqref{c} in Definition~\ref{def:1-relax}, if group 1 envies group 2, there exists an amount-preserving allocation $A'$ satisfying
\begin{align*}
u(1,A') > u(1,A^*),
\end{align*}
but this contradicts that $A^*$ maximizes the utility of group 1 under the capacity constraint.
Thus, group 1 cannot envy group 2.

From the above, we have shown that the allocation $A^*$ is 1-relaxed envy-free.
Furthermore, as in \cite{school}, such an ideal allocation $A^*$ for group 1 can be computed in polynomial time using the Hungarian method \cite{Kuhn,Munkres} (by reducing to the so-called weighted $b$-matching problem; see, e.g., \cite{Combinatorial}). \qed
\end{proof}

\subsection{Main case: When two groups have the same size}\label{sec:main}
In what follows, we assume $|N_1| = |N_2|$.
We first show a lemma that guarantees the existence of a symmetric allocation for any not 1-relaxed envy-free amount-preserving allocation.
Here, for a given allocation $A$, an allocation $A'$ is called a \emph{perfectly-swapped} allocation with respect to $A$ if it satisfies $|{{A'}^{-1}_{1}}(k)| = |{A^{-1}_{2}}(k)|$ and $|{{A'}^{-1}_{2}}(k)| = |{A^{-1}_{1}}(k)|$ for all $k \in M$.

\begin{lemma}\label{lem:b}
Let $A$ be an amount-preserving allocation.
If there is no perfectly-swapped allocation $A'$ with respect to $A$, then $A$ is 1-relaxed envy-free.
\end{lemma}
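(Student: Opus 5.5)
The plan is to argue by contraposition: assume $A$ is amount-preserving but not 1-relaxed envy-free, and build a perfectly-swapped allocation with respect to $A$ out of a witness of envy. Condition \eqref{a} holds trivially, since $|A^{-1}(k)| = c_k$ for every $k$. So failure of 1-relaxed envy-freeness means some group $i_1 \in \{1,2\}$ envies the other group $i_2$. That is, there is an allocation $A'$ satisfying \eqref{b}--\eqref{d}.

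First I would use \eqref{b}. Because $A$ is amount-preserving, the right-hand side of \eqref{b} is $n \cdot 0 = 0$. Hence $\left||{A'}^{-1}(k)| - c_k\right| = 0$ for all $k$, so $A'$ is also amount-preserving.

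Next I would use \eqref{d} together with $|N_1| = |N_2|$, summing over schools exactly as in the proof of Lemma~\ref{lem:a}:
\[
|N_{i_1}| = \sum_{k \in M} |{A'}^{-1}_{i_1}(k)| \le \sum_{k \in M} |A^{-1}_{i_2}(k)| = |N_{i_2}| = |N_{i_1}|.
\]
The two ends are equal, and each summand on the left is at most the corresponding summand on the right. Therefore every inequality in \eqref{d} holds with equality:
\[
|{A'}^{-1}_{i_1}(k)| = |A^{-1}_{i_2}(k)| \quad (\forall k \in M).
\]

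Finally, since there are only two groups and both $A$ and $A'$ are amount-preserving, I would compute for every $k$:
\[
|{A'}^{-1}_{i_2}(k)| = c_k - |{A'}^{-1}_{i_1}(k)| = c_k - |A^{-1}_{i_2}(k)| = |A^{-1}_{i_1}(k)|.
\]
Combined with the previous equalities, this says $A'$ is a perfectly-swapped allocation with respect to $A$, which contradicts the hypothesis. Condition \eqref{c} is not even needed.

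I expect no real obstacle. The only point needing care is recognizing that \eqref{b} collapses to amount-preservation of $A'$, since this is what lets the counts for group $i_2$ be determined from those for group $i_1$.
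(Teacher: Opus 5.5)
Your proposal is correct and follows essentially the same argument as the paper: contraposition, using \eqref{b} to force $A'$ to be amount-preserving, summing \eqref{d} with $|N_1| = |N_2|$ to get equality at every school, and then using the amount-preservation of both allocations to recover the other swapped count. The only differences are cosmetic: you treat a general pair $(i_1,i_2)$ instead of assuming without loss of generality that $i_1 = 1$, and you note explicitly that \eqref{a} holds automatically and that \eqref{c} is never used.
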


\begin{proof}
We prove the contrapositive; that is, if the allocation $A$ is not 1-relaxed envy-free, then there exists a perfectly-swapped allocation $A'$.
Suppose that $A$ is not 1-relaxed envy-free.
By symmetry, we assume without loss of generality that group 1 envies group 2.
Let this envy in allocation $A$ be caused by the existence of some allocation $A'$ that simultaneously satisfies \eqref{b}--\eqref{d} in Definition~\ref{def:1-relax}.
Since $A$ is amount-preserving, this should also be true for $A'$ by \eqref{b}.
Furthermore, as the sizes of the two groups are equal, the following holds:
\begin{align*}
\sum_{k\in M}|{{A'}^{-1}_{1}}(k)| = |N_1| = |N_2| = \sum_{k\in M}|{A^{-1}_{2}}(k)|.
\end{align*}
This implies that \eqref{d} holds with equality for all $k \in M$, i.e.,
\begin{align}\label{h}
|{{A'}^{-1}_{1}}(k)| = |{A^{-1}_{2}}(k)| \quad (\forall k \in M).
\end{align}
Also, the total number of students assigned to each school is the same in both allocations $A$ and $A'$ (as they are both amount-preserving), yielding the following:
\begin{align}\label{i}
|{A^{-1}_{1}}(k)| + |{A^{-1}_{2}}(k)| = c_k = |{{A'}^{-1}_{1}}(k)|+ |{{A'}^{-1}_{2}}(k)| \quad (\forall k \in M).
\end{align}
From \eqref{h} and \eqref{i}, we have the following:
\begin{align*}
|{{A'}^{-1}_{2}}(k)| = |{A^{-1}_{1}}(k)| \quad (\forall k\in M).
\end{align*}
Thus, $A'$ is a perfectly-swapped allocation with respect to $A$. \qed
\end{proof}

Let $B$ be the initial allocation.
By Lemma~\ref{lem:b}, if there is no perfectly-swapped allocation with respect to $B$, then $B$ is 1-relaxed envy-free.
Finding such an allocation is simply reduced to the maximum flow problem (more specifically, the so-called $b$-matching problem; see, e.g., \cite{Combinatorial}) and solved in polynomial time \cite{FF1956}.
Thus, we may assume that there exists a perfectly-swapped allocation $B'$.
The following lemma enables us to make a stronger assumption: there exists an amount-preserving allocation $A$ and a perfectly-swapped allocation $A'$ such that
\[|A_1^{-1}(k)| - |A_2^{-1}(k)| = |{A'}_2^{-1}(k)| - |{A'}_1^{-1}(k)| \in \{-1, 0, 1\} \quad (\forall k \in M).\]

\begin{lemma}\label{lem:d}
Given a perfectly-swapped allocation $B'$ with respect to the initial allocation $B$, one can compute an amount-preserving allocation $A$ such that $\left||A_1^{-1}(k)| - |A_2^{-1}(k)|\right| \le 1$ for all $k \in M$.
\end{lemma}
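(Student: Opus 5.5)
We build $A$ by ``averaging'' $B$ and $B'$ through an integral flow. The allocations $B$ and $B'$ have the same per-school totals $c_k$, and their group counts are swapped: $|B'^{-1}_1(k)| = |B^{-1}_2(k)|$ and $|B'^{-1}_2(k)| = |B^{-1}_1(k)|$. We treat the two allocations as two feasible flows in a single network and take their midpoint, which is a fractional feasible flow. The target $A$ will be an integral flow close to that midpoint.

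\textbf{Network.} The network has a source $s$, a node for every student, and two copies $(k,1)$ and $(k,2)$ of every school $k$. There is an arc $s \to j$ of capacity $1$ for each student $j$. For $j \in N_i$ and $\{j,k\} \in \mathcal{E}$ there is an arc $j \to (k,i)$. Each $(k,i)$ has an arc to a node $k$, and each $k$ has an arc to the sink $t$ with lower and upper bound $c_k$.

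Each of $B$ and $B'$ defines an integral feasible flow of value $n$. Their average $f = (f_B + f_{B'})/2$ is also feasible, by convexity of the flow polytope. On the arc $(k,1) \to k$, the value of $f$ is $x_k = (|B^{-1}_1(k)| + |B^{-1}_2(k)|)/2 = c_k/2$, and the same holds for $(k,2) \to k$.

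\textbf{Rounding.} We now impose lower bound $\lfloor c_k/2 \rfloor$ and upper bound $\lceil c_k/2 \rceil$ on both arcs $(k,1) \to k$ and $(k,2) \to k$. The fractional flow $f$ respects these bounds. All bounds in the network are integers, so by the integrality theorem for flows with lower and upper bounds (Hoffman's circulation theorem) there is an integral feasible flow $f^*$ of value $n$.

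An integral $f^*$ saturates every arc $s \to j$, so every student is routed to exactly one accessible school. Because the arc $k \to t$ carries exactly $c_k$, the resulting allocation $A$ is amount-preserving. Moreover $|A^{-1}_i(k)| \in \{\lfloor c_k/2 \rfloor, \lceil c_k/2 \rceil\}$ for $i = 1,2$, so $\bigl||A^{-1}_1(k)| - |A^{-1}_2(k)|\bigr| \le 1$, as required.

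\textbf{Computation.} The flow $f^*$ is found in polynomial time. The standard reduction turns the problem with lower bounds into a max-flow instance, which Ford--Fulkerson solves~\cite{FF1956}. Feasibility of that instance is already certified by $f$.

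\textbf{Main obstacle.} The delicate step is to justify integrality after adding lower bounds, and to check that $f$ really satisfies the new bounds. The second point is immediate from $x_k = c_k/2$. For the first, we use that the node--arc constraint matrix of the network is totally unimodular, so any nonempty polytope defined with integer bounds has an integral vertex.

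As a possible refinement, we could keep $A$ in the same ``parity class'' as $B$, for instance by also requiring a perfectly-swapped counterpart of $A$ for later use. We would obtain it symmetrically by swapping the group roles, which gives the bounds $\lceil c_k/2 \rceil, \lfloor c_k/2 \rfloor$ in the opposite order, which is feasible again by the same averaging. The lemma as stated, however, requires only the balancing above.
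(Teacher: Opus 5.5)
Your proof is correct and is essentially the paper's proof. It uses the same school--group network with bounds $[\lfloor c_k/2\rfloor,\lceil c_k/2\rceil]$, the half-sum of the flows of $B$ and $B'$ as a fractional witness, and flow integrality; the only differences are cosmetic (all accessible edges instead of just $B(j),B'(j)$, and an $s$--$t$ flow instead of a circulation).

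Drop the closing ``refinement'' remark, though. For odd $c_k$ the averaged flow puts $c_k/2$ on each arc, so it certifies nothing about exact swapped counts. In fact the $A$ your network returns can have no perfectly-swapped counterpart: take four unit-capacity schools, one group-1 student who can reach only schools $1,2$, everyone else reaching all schools, $B$ and $B'$ placing group~1 in $\{1,3\}$ and $\{2,4\}$, and $A$ placing group~1 in $\{1,2\}$. This is why the paper handles that case via Lemma~\ref{lem:b} instead.
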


\begin{proof}
Let $N = \{ i_1,i_2,\dots,i_n \}$ be the set of students, and let $M = \{s_1,s_2,\dots,s_m \}$ be the set of schools.
Let $g(j) \in \{1, 2\}$ denote the group of each student $j \in N$.
We construct an instance of the network flow problem as follows (see Fig.~\ref{fig:a}):
\begin{itemize}
 \item Vertex set $V = \{r^+\} \cup N \cup (M \times \{1, 2\}) \cup M \cup \{r^-\}$:\\
 $r^+$: source, $j \in N$: student vertices, $(k,1), (k,2) \in M \times \{1, 2\}$: school-group pair vertices, $k\in M$: school vertices, $r^-$: sink.
 \item Edge set $E = E_s \cup E_N \cup E_G \cup E_M \cup E_r$ and capacity (flow lower bound $\ell$ and upper bound $u$):
 \begin{itemize}
  \item From source $r^+$ to each student:
  \begin{itemize}
      \item $E_s=\{ (r^+, j)\mid j\in N\}$;
      \item $\ell(e)= u(e)=1$.
  \end{itemize}
 \item From each student to school-group pairs:
 \begin{itemize}
  \item $E_N=\{ (j,(k,g(j)))\mid j\in N,\ k\in \{ B(j), B'(j)\}\}$ (if $B(j)=B'(j)$, we allow multiple edges);
  \item $\ell(e)=0$, $u(e)=1$.
 \end{itemize}
  \item From school-group pairs to each school:
  \begin{itemize}
   \item $E_G=\{((k,h),k)\mid k\in M,\ h\in \{1,2\}\}$;
   \item $\ell(e) = \lfloor c_k/2 \rfloor$, $u(e) = \lceil c_k/2 \rceil$.
  \end{itemize}
   \item From each school to sink $r^-$:
  \begin{itemize}
   \item $E_M=\{ (k, r^-)\mid k\in M\}$;
   \item $\ell(e)= u(e)=c_k$.
  \end{itemize}
   \item From sink $r^-$ to source $r^+$:
   \begin{itemize}
    \item $E_r=\{(r^-, r^+)\}$;
    \item $\ell(e) = u (e)=n$.
   \end{itemize}
  \end{itemize}
 \end{itemize}

\begin{figure}[t!]
\centering

\begin{tikzpicture}[
    scale=0.8, transform shape, %
    >={Stealth[length=3mm]}, %
    node distance=1.2cm and 2.0cm, %
    every node/.style={font=\large}, %
    label_text/.style={font=\small, above, sloped} %
]

    \node (s0) at (0,0) {$r^+$};

    \node (i2) [right=of s0, yshift=0.5cm] {$i_2$};
    \node (i1) [above=1.2cm of i2] {$i_1$};
    \node (dots_i) [below=0.5cm of i2] {$\vdots$};
    \node (in) [below=0.8cm of dots_i] {$i_n$};

    \node (s1_2) [right=2.5cm of i1, yshift=-0.4cm] {$(s_1, 2)$};
    \node (s1_1) [above=0.1cm of s1_2] {$(s_1, 1)$};
    
    \node (s2_1) [right=2.5cm of i2, yshift=0.4cm] {$(s_2, 1)$};
    \node (s2_2) [below=0.1cm of s2_1] {$(s_2, 2)$};

    \node (dots_pair) at (dots_i -| s1_1) {$\vdots$}; %

    \node (sm_1) [right=2.5cm of in, yshift=0.4cm] {$(s_m, 1)$};
    \node (sm_2) [below=0.1cm of sm_1] {$(s_m, 2)$};

    \node (s1) [right=2.5cm of s1_2, yshift=0.3cm] {$s_1$};
    \node (s2) [right=2.5cm of s2_1, yshift=-0.3cm] {$s_2$};
    \node (dots_s) at (dots_pair -| s1) {$\vdots$};
    \node (sm) [right=2.5cm of sm_1, yshift=-0.3cm] {$s_m$};

    \node (t0) at ($(s0.east) + (12cm,0)$ |- s0) {$r^-$}; %

    \draw[->] (s0) -- node[label_text, midway] {$[1,1]$} (i1);
    \draw[->] (s0) -- (i2);
    \draw[->] (s0) -- (in);

    \draw[->] (i1) -- node[label_text, midway] {$[0,1]$} (s1_1);
    \draw[->] (i1) -- (s2_1); %
    
    \draw[->] (i2) -- (s1_2); %
    \draw[->] (i2) -- (dots_pair);

    \draw[->] (in) -- (dots_pair);
    \draw[->] (in) -- (sm_2);

    \draw[->] (s1_1) -- (s1);
    \draw[->] (s1_2) -- (s1);
    \node[font=\small] at ($(s1_1)!0.5!(s1) + (0.4, 0.5)$) {$[\lfloor c_k/2 \rfloor, \lceil c_k/2 \rceil]$};

    \draw[->] (s2_1) -- (s2);
    \draw[->] (s2_2) -- (s2);

    \draw[->] (sm_1) -- (sm);
    \draw[->] (sm_2) -- (sm);

    \draw[->] (s1) -- node[label_text] {$[c_k, c_k]$} (t0);
    \draw[->] (s2) -- (t0);
    \draw[->] (sm) -- (t0);

    \draw[->, line width=0.6pt] (t0) to[out=100, in=80, looseness=1.1] node[above] {$[n, n]$} (s0);

\end{tikzpicture}
\caption{Conceptual diagram of the network in Lemma \ref{lem:d}}
\label{fig:a}
\end{figure}
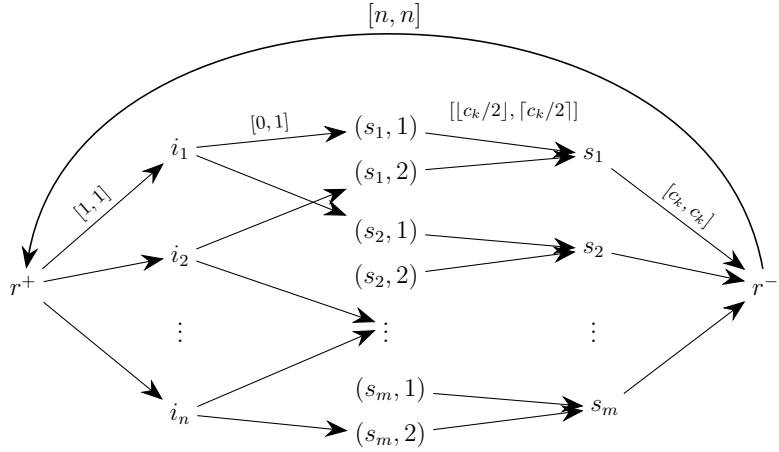
 
In the following, we consider a \emph{circulation} in this network, which is a function $f \colon E \to \mathbb{R}$ satisfying the following:
\begin{itemize}
    \item[] \textbf{Capacity constraint:} $\ell(e) \le f(e) \le u(e)$ for all edges $e \in E$;
    \item[] \textbf{Flow conservation:} $\sum_{e \in \delta^+(v)} f(e) = \sum_{e \in \delta^-(v)} f(e)$ for all vertices $v \in V$, where $\delta^+(v)$ and $\delta^-(v)$ denote the sets of edges outgoing from and incoming to $v$, respectively.
\end{itemize}
As $B$ and $B'$ are both amount-preserving allocations, we can construct a circulation $f$ as follows:
\begin{itemize}
 \item $f(e) = 1$ for each $e \in E_s$;
 \item $f(e) = 1/2$ for each $e \in E_N$;
 \item $f(e) = c_k/2$ for each $e \in E_G$;
 \item $f(e) = c_k$ for each $e \in E_M$;
 \item $f(e) = n$ for each $e \in E_r$.
\end{itemize}
The capacity constraint is obviously satisfied, and the flow conservation is confirmed by regarding the assignment of student $j$ to school $k$ under allocation $A \in \{B, B'\}$ as a circulation of value $1/2$ along a directed cycle $r^+ \to j \to (k, g(j)) \to k \to r^- \to r^+$ (note that $f$ is the sum of these circulations).

Consequently, by the integrality of the network flow problem (cf.~\cite[Theorem 11.2]{Combinatorial}), we conclude that an integer circulation $\tilde{f} \colon E \to \mathbb{Z}$ exists in the same network.
Furthermore, such an integer circulation can be found by a single maximum-flow computation in an auxiliary network (cf.~\cite[Theorem 11.3]{Combinatorial}); thus, it can be computed in polynomial time \cite{FF1956}.
By the definition of the network ($E_s$ and $E_r$ in particular), the obtained integer circulation $\tilde{f}$ is decomposed into $n$ circulations, each of value $1$ along a directed cycle of form $r^+ \to j \to (k, g(j)) \to k \to r^- \to r^+$ as above.
Then, an allocation $A$ is naturally determined so that each student $j$ is assigned to school $k$ along the unique relevant cycle.
By the definition of capacity on $E_G$ and $E_M$, under $A$, each school has exactly $\lfloor c_k/2 \rfloor$ students of one group and exactly $\lceil c_k / 2 \rceil$ students of the other group, whose difference is at most $1$.
This completes the proof. \qed
\end{proof}

Let $A$ be the allocation obtained by Lemma \ref{lem:d}.
If a perfectly-swapped allocation $A'$ does not exist, then $A$ is already a 1-relaxed envy-free allocation by Lemma \ref{lem:b}.
Thus, we assume that such an $A'$ exists.
Then, by appropriately modifying $A$ and $A'$, we can obtain a 1-relaxed envy-free allocation as follows, which completes the proof of Theorem~\ref{thm:a}:

\begin{lemma}\label{lem:e}
Given an amount-preserving allocation $A$ and a perfectly-swapped allocation $A'$ such that $\left||A_1^{-1}(k)| - |A_2^{-1}(k)|\right| = \left||{A'}_1^{-1}(k)| - |{A'}_2^{-1}(k)|\right| \le 1$ for all $k \in M$, one can compute an allocation $X$ such that $\left||X^{-1}(k)| - c_k\right| \le 1$ and $|X_1^{-1}(k)| = |X_2^{-1}(k)|$ for all $k \in M$, which is 1-relaxed envy-free.
\end{lemma}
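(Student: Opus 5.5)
The plan is to split the statement into two independent parts. First, \emph{any} allocation $X$ with $|X_1^{-1}(k)| = |X_2^{-1}(k)|$ for all $k$ that also satisfies \eqref{a} is automatically 1-relaxed envy-free. Second, such an $X$ can be built by an integral-circulation argument in the spirit of Lemma~\ref{lem:d}, using $A$ and $A'$ to certify a fractional solution.

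For the first part, suppose (by symmetry) that group 1 envies group 2 via some $X'$. Condition \eqref{d} gives $|{X'}_1^{-1}(k)| \le |X_2^{-1}(k)| = |X_1^{-1}(k)|$ for every $k$. Summing over $k$ and using $|N_1| = |N_2|$, as in the proof of Lemma~\ref{lem:b}, forces equality $|{X'}_1^{-1}(k)| = |X_1^{-1}(k)|$ for all $k$. Since the utility of group 1 depends only on these per-school counts, $u(1,X') = u(1,X)$, which contradicts \eqref{c}. Note that this argument never uses \eqref{b}, so it does not matter that $X$ is not amount-preserving.

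For the construction, the key is to enforce equal group counts at each school. I would do this by routing flow \emph{from} group-1 students \emph{through} a school \emph{to} group-2 students. Each unit of flow through school $k$ then pairs one group-1 student with one group-2 student at $k$. The network is as follows.
\begin{itemize}
\item Arcs $r^+ \to j$ with $\ell = u = 1$ for each $j \in N_1$.
\item Arcs $j \to k_{\mathrm{in}}$ with bounds $[0,1]$ for $j \in N_1$ and $k \in \{A(j), A'(j)\}$.
\item An arc $k_{\mathrm{in}} \to k_{\mathrm{out}}$ with bounds $[\lfloor c_k/2\rfloor, \lceil c_k/2\rceil]$ for each school $k$.
\item Arcs $k_{\mathrm{out}} \to j'$ with bounds $[0,1]$ for $j' \in N_2$ and $k \in \{A(j'), A'(j')\}$.
\item Arcs $j' \to r^-$ with $\ell = u = 1$ for each $j' \in N_2$.
\item A return arc $r^- \to r^+$ with $\ell = u = |N_1|$.
\end{itemize}
Taking value $1/2$ on every student--school arc gives a feasible fractional circulation. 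Indeed, each group-1 student sends $1/2$ to $A(j)$ and $1/2$ to $A'(j)$. The throughput at school $k$ is then $(|A_1^{-1}(k)| + |{A'}_1^{-1}(k)|)/2$. By the perfectly-swapped property this equals $(|A_1^{-1}(k)| + |A_2^{-1}(k)|)/2 = c_k/2$, which also equals the group-2 outflow $(|A_2^{-1}(k)| + |{A'}_2^{-1}(k)|)/2$. Hence flow conservation holds at both $k_{\mathrm{in}}$ and $k_{\mathrm{out}}$, and $c_k/2$ lies within the bounds.

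By integrality of circulations (\cite[Theorem 11.2]{Combinatorial}), an integer circulation exists, and it can be found by one max-flow computation in polynomial time (\cite[Theorem 11.3]{Combinatorial}, \cite{FF1956}). In it, every student has exactly one outgoing (or incoming) unit arc, which defines $X$. Each $X(j)$ is $A(j)$ or $A'(j)$, so it is an accessible school and $X$ is a valid allocation. At school $k$, both groups receive exactly the integer throughput $x_k \in \{\lfloor c_k/2\rfloor, \lceil c_k/2\rceil\}$. Hence $|X^{-1}(k)| = 2x_k$ differs from $c_k$ by at most $1$, which gives \eqref{a}, and the group counts are equal. Combined with the first paragraph, this proves the lemma.

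The main obstacle I expect is designing the network so that it encodes \emph{equality} of the two group counts at each school, which the Lemma~\ref{lem:d} network does not do. The in/out school-node trick addresses this, and the remaining care is in checking that the half-flow is conserved, which is exactly where the perfectly-swapped hypothesis is used.
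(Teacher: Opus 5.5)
Your proof is correct. The envy-freeness half is essentially the paper's Lemma~\ref{lem:h} argument: the paper phrases it via $u(1,X)=u(2,X)$, you via $|{X'}_1^{-1}(k)| = |X_1^{-1}(k)|$, which amounts to the same thing.

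The construction, however, takes a genuinely different route. The paper keeps group~2 fixed as in $A$ and runs Algorithm~\ref{alg:adjustment}. In $G_X$, with arcs $(X(j),A'(j))$ for $j\in N_1$, out-degree minus in-degree equals $\Delta_{X,k}$; this is where the perfectly-swapped property enters. Hence every excess school reaches a deficient one (Lemma~\ref{lem:f}), and shifting group-1 students along such a path cancels one excess/deficient pair. This is an explicit combinatorial procedure whose output differs from $A$ only on group~1. But it genuinely needs the hypothesis $|\Delta_{A,k}|\le 1$, i.e., Lemma~\ref{lem:d} must be run first. The reason is that the final $X$ has $|X^{-1}(k)| = 2|A_2^{-1}(k)| = c_k - \Delta_{A,k}$, so the per-school deviation is exactly $|\Delta_{A,k}|$.

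Your split-node circulation instead encodes $|X_1^{-1}(k)|=|X_2^{-1}(k)|$ directly. It uses only that $A$ is amount-preserving and $A'$ is perfectly swapped; the bound $|\Delta_{A,k}|\le 1$ is never used. You could therefore apply it straight to $B$ and $B'$, replacing both Lemma~\ref{lem:d} and Algorithm~\ref{alg:adjustment} with a single flow computation. The cost is relying on circulation integrality rather than an explicit path argument, though Lemma~\ref{lem:d} already relies on integrality anyway.

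One small detail: when $A(j)=A'(j)$, the set $\{A(j),A'(j)\}$ has only one element. You should then either use parallel arcs, as the paper does in Lemma~\ref{lem:d}, or put value $1$ on that single arc. Otherwise ``value $1/2$ on every student--school arc'' does not conserve flow at $j$.
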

\begin{proof}
Let us define $\Delta_{A, k} \coloneqq |A^{-1}_1(k)| - |A^{-1}_2(k)| \in \{-1, 0, 1\}$.
We call a school with $\Delta_{A, k} = -1$ a \emph{deficient} school, and a school with $\Delta_{A, k} = 1$ an \emph{excess} school.

We obtain a desired allocation $X$ by Algorithm~\ref{alg:adjustment}.
Starting with $X = A$, this algorithm resolves pairs of excess and deficient schools one by one by moving students of group 1.
To do this, considering only group 1, we construct a directed graph $G_X = (M, E_X)$ on the set of schools by
\begin{align}\label{eq:E_X}
  E_X \coloneqq \{ e_j = (X(j), A'(j)) \mid j \in N_1\},  
\end{align}
where $G_X$ may contain a self-loop when $X(j) = A'(j)$ for some $j \in N_1$.
We move students along a directed path in $G_X$.
A student $j \in N$ moved in each iteration is fixed to their destination school; that is, we set $X(j) \leftarrow A'(j)$, who will never be moved further.
Let $S_X$ and $T_X$ denote the sets of excess schools and deficient schools, respectively, with respect to the currect allocation $X$.

\begin{algorithm}[t]\caption{Adjustment to a balanced allocation}
\label{alg:adjustment}
\begin{algorithmic}
\Require Allocations $A$ and $A'$ satisfying the condition in Lemma~\ref{lem:e}.
\Ensure An allocation $X$ satisfying the condition in Lemma~\ref{lem:e}.
\State \textbf{Step 1 (Initialization):} 
\Statex \quad Set $X \leftarrow A$ and construct the directed graph $G_X$.

\State \textbf{Step 2 (Search and update):}
\Statex \quad Repeat the following operations until $S_X$ becomes empty:
\State \quad (a) Select an excess school $s \in S_X$.
\State \quad (b) Find a directed path $P = (s = v_0, v_1, \dots, v_\ell = t)$ in $G_X$ from $s$ to some deficient school $t \in T_X$.
\State \quad (c) For each edge $e_j = (v_i, v_{i+1})$ on $P$, update $X(j) \leftarrow A'(j)$ (originally, $X(j) = v_i$ and $A'(j) = v_{i+1}$).
After that, update $G_X$ (as well as $S_X$ and $T_X$) with respect to the new $X$.

\State \textbf{Step 3 (Termination):}
\Statex \quad Return $X$.
\end{algorithmic}
\end{algorithm}

The proof is completed as follows:
Lemma~\ref{lem:f} guarantees that there always exists a path $P$ from any $s \in S_X$ to $T_X$ in $G_X$ when $S_X \neq \emptyset$.
After that, we see that the algorithm terminates in polynomial time in Lemma~\ref{lem:g} and that its output $X$ is indeed a desired allocation in Lemma~\ref{lem:h}. \qed
\end{proof}

\begin{lemma}\label{lem:f}
Let $X$ be the allocation during an iteration of Algorithm~\ref{alg:adjustment}.
Fix any excess school $s \in S_X$, and define the set of schools reachable from $s$ in $G_X$ as
\begin{align*}
R \coloneqq \{k\in M \mid s \rightsquigarrow k\}.
\end{align*}
Then, $R$ always contains a deficient school in $T_X$.
\end{lemma}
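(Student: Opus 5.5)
The plan is a counting argument on the set $R$. The key fact is that $R$ is closed under out-edges of $G_X$, so group-1 students can only "flow into" $R$ under $A'$. Comparing this with the group-2 counts, which never change, will force a deficient school inside $R$.

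First I record two invariants of Algorithm~\ref{alg:adjustment}, proved by induction on the iterations.
\begin{itemize}
\item[(i)] Only students of group 1 are ever moved, so $X_2^{-1}(k) = A_2^{-1}(k)$ for all $k \in M$ throughout.
\item[(ii)] $\Delta_{X,k} \coloneqq |X_1^{-1}(k)| - |X_2^{-1}(k)| \in \{-1,0,1\}$ for all $k$ throughout.
\end{itemize}
Invariant (ii) holds initially by the hypothesis of Lemma~\ref{lem:e}. For the inductive step, moving the students along a path $P$ from $s$ to $t$ has the following effect. Every internal vertex $v_i$ of $P$ loses one group-1 student (along $e_{j}=(v_i,v_{i+1})$) and gains one (along the edge entering $v_i$). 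The start $s$ loses exactly one, and the end $t$ gains exactly one. Hence $\Delta_{X,s}$ drops from $1$ to $0$, $\Delta_{X,t}$ rises from $-1$ to $0$, and all other values of $\Delta_{X,k}$ are unchanged. Should $P$ fail to be simple, I will simply take a simple path, which always exists once some path exists.

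Next comes the main counting step. By definition of reachability, $R$ has no outgoing edges in $G_X$: if $X(j) \in R$ for some $j \in N_1$, then $A'(j) \in R$. This also covers already-fixed students, whose edges are self-loops. The map $j \mapsto j$ therefore sends the group-1 students located in $R$ under $X$ injectively into the group-1 students located in $R$ under $A'$. Hence
\begin{align*}
\sum_{k\in R} |X_1^{-1}(k)| \le \sum_{k\in R} |{A'}_1^{-1}(k)| = \sum_{k\in R} |A_2^{-1}(k)| = \sum_{k\in R} |X_2^{-1}(k)|.
\end{align*}
The first equality uses that $A'$ is perfectly swapped with respect to $A$, and the second uses invariant (i). Consequently $\sum_{k \in R} \Delta_{X,k} \le 0$.

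To conclude, note that $s \in R$ and $\Delta_{X,s} = 1$, so the remaining schools of $R$ must contribute a total of at most $-1$. By invariant (ii) every term is at least $-1$, so some $k \in R$ must have $\Delta_{X,k} = -1$, i.e., $k \in T_X$. The only delicate point is maintaining invariant (ii) across iterations, together with the observation that fixed students still contribute self-loop edges, so that the closure argument applies to all of $N_1$. I expect this bookkeeping, rather than the counting itself, to be the main thing to get right.
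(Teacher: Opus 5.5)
Your proof is correct and follows essentially the same argument as the paper. The paper phrases the closure of $R$ as $\sum_{k\in R} d_X^+(k) \le \sum_{k\in R} d_X^-(k)$ together with the identity $d_X^+(k)-d_X^-(k)=\Delta_{X,k}$; this is your inclusion of the group-1 students located in $R$ under $X$ into those located in $R$ under $A'$, combined with the perfect swap and the fact that group 2 never moves. Your explicit invariant $\Delta_{X,k}\in\{-1,0,1\}$ is a useful addition, since the paper uses it tacitly when it infers $\Delta_{X,k}\ge 0$ for every non-deficient $k\in R$.
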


\begin{proof}
Recall that $\Delta_{X, k} = |X^{-1}_1(k)| - |X^{-1}_2(k)|$, and by definition of $G_X$, the out-degree $d_X^+(k)$ and in-degree $d_X^-(k)$ of each school $k\in M$ can be written as follows:
\begin{align}\label{i2}
d_X^+(k) = |X_1^{-1}(k)|, \quad d_X^-(k) = |{A'}_1^{-1}(k)|.
\end{align}
Since $A'$ is a perfectly-swapped allocation with respect to $A$ and the students of group 2 are not moved, we have
\begin{align}\label{k}
|{A'}_1^{-1}(k)| = |{A}_2^{-1}(k)| = |X_2^{-1}(k)|.
\end{align}
From \eqref{i2} and \eqref{k}, we obtain the following:
\begin{align}\label{l}
d_X^+(k) - d_X^-(k) = |X_1^{-1}(k)| - |X_2^{-1}(k)| = \Delta_{X, k}.
\end{align}

Suppose for the sake of contradiction that $R$ does not contain a deficient school.
Then, we have
\begin{align}\label{m}
\Delta_{X, s} = 1, \quad \Delta_{X, k} \ge 0 \quad (\forall k\in R).
\end{align} 
From \eqref{l} and \eqref{m}, as $s \in R$, we obtain
\begin{align}
\sum_{k\in R}\bigl(d_X^+(k) - d_X^-(k)\bigr)= \sum_{k\in R}\Delta_{X, k} \ge 1,
\end{align}
which can be rephrased as
\begin{align}
\sum_{k\in R} d_X^+(k) > \sum_{k\in R} d_X^-(k).\label{eq:neg}
\end{align}
On the other hand, by definition of $R$, no edge in $G_X$ leaves from $R$ to outside.
Thus, counting the edges incident to $R$, we obtain
\begin{align*}
\sum_{k\in R} d_X^+(k) \le \sum_{k\in R} d_X^-(k),%
\end{align*}
which contradicts \eqref{eq:neg}.
Therefore, we have shown that $R$ contains at least one deficient school. \qed
\end{proof}

\begin{lemma}\label{lem:g}
Algorithm \ref{alg:adjustment} terminates in polynomial time.
\end{lemma}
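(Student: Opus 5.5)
The plan is to show that each iteration of Step~2 removes exactly one excess school and exactly one deficient school, and creates no new ones. Then the number of iterations is at most $|S_A| \le m$. Since each iteration is a graph search plus an update of at most $n$ students, the total running time is polynomial.

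\textbf{Single iteration.} By Lemma~\ref{lem:f}, a path from $s$ to $T_X$ exists whenever $S_X \neq \emptyset$. We find it by breadth-first search in $G_X$, which has $m$ vertices and $|N_1| \le n$ edges, so this takes $O(n+m)$ time. We take $P$ to be a \emph{simple} path, which breadth-first search provides. Note that $s \neq t$, because $\Delta_{X,s}=1$ while $\Delta_{X,t}=-1$, so $\ell \ge 1$.

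\textbf{Effect of the update.} Only group-1 students are moved, so $|X_2^{-1}(k)|$ never changes. For an edge $e_j=(v_i,v_{i+1})$ on $P$, setting $X(j)\leftarrow A'(j)=v_{i+1}$ decreases $|X_1^{-1}(v_i)|$ by one and increases $|X_1^{-1}(v_{i+1})|$ by one. Because $P$ is simple, each internal vertex $v_1,\dots,v_{\ell-1}$ has exactly one outgoing and one incoming path edge, so its count is unchanged. The start $s$ loses one group-1 student and the end $t$ gains one. Hence $\Delta_{X,s}$ goes from $1$ to $0$, $\Delta_{X,t}$ goes from $-1$ to $0$, and every other $\Delta_{X,k}$ is unchanged. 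Therefore $|S_X|$ and $|T_X|$ each drop by exactly one, and all values $\Delta_{X,k}$ stay in $\{-1,0,1\}$.

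The identity \eqref{l} used in Lemma~\ref{lem:f} remains valid after the update, since it relied only on group~2 being untouched. Moreover, each moved student $j$ now has $X(j)=A'(j)$, so $e_j$ becomes a self-loop. A self-loop never lies on a simple path, so such a student is never moved again. This is consistent with the algorithm's convention that moved students are fixed.

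\textbf{Counting and total time.} Initially $|S_A|\le m$, and $|S_X|$ strictly decreases in every iteration. So Step~2 runs at most $m$ times, each iteration costs $O(n+m)$ including rebuilding $G_X$, and the total time is $O(m(n+m))$.

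The only delicate point is the bookkeeping showing that internal path vertices keep their $\Delta$ values. That argument needs $P$ to be a simple path, which the breadth-first search choice guarantees.
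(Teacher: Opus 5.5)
Your proof is correct and follows essentially the same route as the paper. Both show that moving group-1 students along an $s$--$t$ path leaves the counts at inner schools unchanged and brings $\Delta$ at $s$ and $t$ to $0$, so the paper's potential $\Phi_X=\sum_{k}|\Delta_{X,k}|$ drops by $2$ per iteration (equivalently, your $|S_X|$ drops by $1$). Your insistence on a simple path, and your remark that moved students become self-loops, make explicit an assumption the paper leaves implicit.
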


\begin{proof}
From Lemma \ref{lem:f}, in each iteration of Step 2 of Algorithm \ref{alg:adjustment}, there exists a path from $s$ to some $t \in T_X$ in $G_X$, which is easily found in linear time.
By moving students of group 1 along the path $P$, the value $\Phi_X \coloneqq \sum_{k \in M}|\Delta_{X,k}|$ decreases by $2$ as follows.
For each inner vertices $v_i$ on $P$, exactly one student of group $1$ is moved out and exactly one is moved in, so $|X^{-1}_1(v_i)|$ does not change before and after the update in (c).
In contrast, for the end vertices $s$ and $t$, exactly one student of group $1$ is moved out and moved in, respectively, so $|X^{-1}_1(s)|$ decreases by one and $|X^{-1}_1(t)|$ increases by one, which makes $|\Delta_{X,s}|$ and $|\Delta_{X,t}|$ from $1$ to $0$.
Thus, $\Phi_X$ becomes $0$ in at most $\Phi_A / 2 \le m/2$ iterations.
Thus, Algorithm \ref{alg:adjustment} terminates in polynomial time. \qed
\end{proof}

\begin{lemma}\label{lem:h}
    The output $X$ of Algorithm \ref{alg:adjustment} satisfies $\left||X^{-1}(k)| - c_k\right| \le 1$ and $|X_1^{-1}(k)| = |X_2^{-1}(k)|$ for all $k \in M$, which is 1-relaxed envy-free.
\end{lemma}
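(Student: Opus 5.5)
Three things have to be checked: the output is balanced, each school's capacity is violated by at most one, and no group envies the other. The structural facts come from the invariants of Algorithm~\ref{alg:adjustment}, established in the proofs of Lemmas~\ref{lem:f} and~\ref{lem:g}.

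\textbf{Balance.} By Lemma~\ref{lem:g} the algorithm stops when $S_X=\emptyset$, i.e., $\Phi_X=0$ in the proof of that lemma. Each iteration removes exactly one excess and one deficient school, and inner path vertices keep their value of $\Delta_{X,k}$. Since $\sum_k \Delta_{X,k}=|N_1|-|N_2|=0$ is invariant, $T_X$ empties together with $S_X$. Hence $\Delta_{X,k}=0$, i.e., $|X_1^{-1}(k)|=|X_2^{-1}(k)|$ for all $k$.

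\textbf{Capacity.} Students of group 2 are never moved, so $|X^{-1}(k)| = |X_1^{-1}(k)|+|A_2^{-1}(k)|$. Each update changes $|X_1^{-1}(k)|$ only at the endpoints of $P$, and only for schools with $|\Delta_{X,k}|=1$, driving that value to $0$. So every school is touched at most once, and $|X_1^{-1}(k)|-|A_1^{-1}(k)| = -\Delta_{A,k} \in\{-1,0,1\}$. Since $|A^{-1}(k)|=c_k$, this gives $\bigl||X^{-1}(k)|-c_k\bigr|\le 1$, which is \eqref{a}.

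\textbf{Envy-freeness.} Suppose group $i_1$ envies $i_2$ via some $A''$. By \eqref{d}, $|{A''}^{-1}_{i_1}(k)|\le |X^{-1}_{i_2}(k)| = |X^{-1}_{i_1}(k)|$ for every $k$. Summing over $k$ gives $|N_{i_1}|$ on both sides, so equality holds at every school. Then $u(i_1,A'')=\sum_k v_k |{A''}^{-1}_{i_1}(k)| = \sum_k v_k|X^{-1}_{i_1}(k)| = u(i_1,X)$, contradicting \eqref{c}. So no envy is possible, whatever deviation \eqref{b} permits.

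The delicate step is the capacity bound: it needs the bookkeeping that a school becomes balanced once and never re-enters $S_X\cup T_X$, so its group-1 count moves by at most one in total. The right invariant is that $\Delta_{X,k}\in\{\Delta_{A,k},0\}$ throughout the run. Once this holds, the envy-freeness part is immediate, because balance makes the seat-subset condition \eqref{d} force $A''$ to reproduce exactly the same per-school counts for $i_1$.
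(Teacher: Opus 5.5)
Your proof is correct and follows the paper's argument. You establish balance at termination, derive the capacity bound from endpoint bookkeeping with group~2 never moved, and show envy-freeness by summing \eqref{d} to force equality at every school, so the utilities coincide; your explicit argument that $T_X$ empties together with $S_X$ (via $\sum_k \Delta_{X,k}=0$) supplies a step the paper leaves implicit, and once balance holds the capacity bound is in fact immediate, since $|X^{-1}(k)|-c_k = |A_2^{-1}(k)|-|A_1^{-1}(k)| = -\Delta_{A,k}$.
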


\begin{proof}
Recall that at the beginning of Algorithm~\ref{alg:adjustment}, $X = A$ is amount-preserving and satisfies $\Delta_{X,k} \in \{-1, 0, 1\}$ for all $k \in M$.
As described in the proof of Lemma~\ref{lem:g}, in each iteration of Step 2, the number of students of group 1 decreases by one at the start vertex $s$ of the path $P$, increases by one at the goal vertex $t$, and remains unchanged at the inner vertices.
Since each school $k$ with $|\Delta_{X, k}| = 1$ is chosen as an end vertex of $P$ exactly once and any student of group 2 is not moved in the algorithm, we have $\left||X^{-1}(k)| - c_k\right| \le 1$ for all $k \in M$.
Also, at the end of Algorithm \ref{alg:adjustment}, the allocation $X$ satisfies $\Delta_{X, k} = 0$ for all $k \in M$, which means that $|X^{-1}_1(k)| = |X^{-1}_2(k)|$ for all $k \in M$.

Consequently, the utilities of the two groups become equal:
\begin{align}\label{u}
u(1,X) = u(2,X).
\end{align}
Suppose for the sake of contradiction that $X$ is not 1-relaxed envy-free.
This means that one group envies the other.
By symmetry, we can assume without loss of generality that group 1 envies group 2.
Let this envy in allocation $X$ be caused by the existence of some allocation $X'$ that simultaneously satisfies \eqref{b}--\eqref{d} in Definition~\ref{def:1-relax}; in particular, \eqref{c} and \eqref{d} are written as follows:
\begin{align}\label{x}
    u(1, X') &> u(1, X), \\ \label{y}
    |{{X'}^{-1}_{1}}(k)|&\le |{X^{-1}_{2}}(k)| \quad (\forall k \in M).
\end{align}
Since the two groups have the same size, we have the following:
\begin{align}\label{z}
    \sum_{k\in M}|{{X'}^{-1}_{1}}(k)| =\sum_{k\in M}|{X^{-1}_{1}}  (k)| = \sum_{k\in M}|{X^{-1}_{2}}(k)|.
\end{align}
From \eqref{y} and \eqref{z}, we obtain
\begin{align}\label{v}
 |{{X'}^{-1}_{1}}(k)|=|{X^{-1}_{2}}(k)| \quad (\forall k \in M).
\end{align}
Thus, by \eqref{u} and \eqref{v} (as the utility is just defined by the sum of the values of assigned schools), the following holds:
\begin{align}
u(1,X') = u(2,X) = u(1,X),
\end{align}
which contradicts \eqref{x}.
Therefore, $X$ is a 1-relaxed envy-free allocation. \qed
\end{proof}

\section{Conclusion and Future Work}
In this study, we have addressed the fair division problem in school redistricting. 
While Procaccia et al.~\cite{school} showed for three or more groups that adding $o(n)$ extra seats in total and relaxing how to compare the utilities are not sufficient to guarantee the existence of an envy-free allocation in general, we have proved for the case of two groups that reasonable envy-freeness is achieved with at most one seat deviation in each school. 
Specifically, we have introduced the 1-relaxed envy-freeness as a relaxation of envy-freeness tailored to the setting where we are given an initial allocation of the students to the schools.
We have shown that such a 1-relaxed envy-free allocation always exists and can be computed in polynomial time.

Although we have shown positive results for two groups, the setting may be considered restrictive in the following senses:
\begin{itemize}
    \item In both our model and the previous work \cite{school}, the value of each school is the same for all students. We can relax this assumption so that the value of each school is the same for all students in each group (which can be different over the groups), but it is still restrictive because the students' preferences may be individually different even in the same group.
    \item In both our model and the previous work \cite{school}, envy of group $i_1$ to group $i_2$ is justified only by another allocation where the seats for $i_1$ in every school is included by the seats for $i_2$ in the school under the current allocation (cf.~\eqref{d} in Definition~\ref{def:1-relax}).
    This makes envy too difficult to occur; in particular, a larger group $i_1$ cannot envy a smaller group $i_2$ (cf.~the proof of Lemma~\ref{lem:a}).
    \item In our problem setting, the capacity of each school is defined as the number of students in the initial allocation. The previous work \cite{school} considers redistricting scenarios driven by demographic changes or capacity fluctuations due to school opening or closing. Our present results can be applied after computing a tentative allocation $B$ that only cares the number of students assigned to each school, but it does not give any direct implication to their setting in general.
\end{itemize}
It is a possible direction of future work to formulate and investigate a wider reasonable setting of envy-free school redistricting to address these restrictive aspects.

\begin{credits}
\subsubsection{\ackname}
This work was supported by JSPS KAKENHI Grant Number JP25H01114 and JST CRONOS Japan Grant Number JPMJCS24K2.

\end{credits}

\end{document}